\newcommand{\nonl}{\renewcommand{\nl}{\let\nl\oldnl}}   
\newcommand{\halfspace}{\kern 0.2em}
\let\emptyset\varnothing
    \DeclareMathOperator*{\argmax}{argmax}                                      
    \DeclareMathOperator*{\argmin}{argmin}                                      
    \DeclareMathOperator*{\maximize}{maximize}  
    \DeclareMathOperator*{\minimize}{minimize}  
    \newcommand{\A}{\mathcal{A}}
    \renewcommand{\H}{\mathcal{H}}
    \renewcommand{\S}{\mathcal{S}}
    \newcommand{\X}{\mathcal{X}}
    \newcommand{\vb}{\mathbf{b}}
    \newcommand{\Tf}{T_{\mathrm{f}}}
    \newtheorem{theorem}{Theorem} 
    \newtheorem{lemma}[theorem]{Lemma}
    \newtheorem{remark}{Remark}
    \newtheorem{definition}{Definition}
\title{Strategic Concealment of Environment Representations in Competitive Games}
\author{
    Yue Guan  \textsuperscript{1} \quad 
    Dipankar Maity \textsuperscript{2}\quad  
    Panagiotis Tsiotras \textsuperscript{1}\\
    \textsuperscript{1}
    Georgia Institute of Technology\\
    \textsuperscript{2}
    University of North Carolina at Charlotte\\
    \texttt{yguan44@gatech.edu, dmaity@charlotte.edu, tsiotras@gatech.edu}
}
\begin{document}

\maketitle

\pagenumbering{arabic}
\begin{abstract}
This paper investigates the strategic concealment of environment representations used by players in competitive games.
We consider a defense scenario in which one player (the Defender) seeks to infer and exploit the representation used by the other player (the Attacker). 
The interaction between the two players is modeled as a Bayesian game:
the Defender infers the Attacker’s representation from its trajectory and places barriers to obstruct the Attacker's path towards its goal, while the Attacker obfuscates its representation type to mislead the Defender.
We solve for the Perfect Bayesian Stackelberg Equilibrium via a bilinear program that integrates Bayesian inference, strategic planning, and belief manipulation.
Simulations show that purposeful concealment naturally emerges:
the Attacker randomizes its trajectory to manipulate the Defender’s belief, inducing suboptimal barrier selections and thereby gaining a strategic advantage.
\end{abstract}


\section{Introduction}
Modern autonomous systems must operate in complex and high-dimensional environments, where maintaining a fully detailed representation of the world is infeasible.
Sensing limitations such as finite resolution and range naturally lead to coarse or partial representations~\cite{lavalle2006planning, thrun2002probabilistic}, 
while computational limits require agents to employ environment representations that deliberately reduce resolution and remove  irrelevant information, for tractable decision-making~\cite{larsson2020q}.
While being indispensable,  such reduced representations inevitably lead to suboptimality in the resulting strategies.

In single-agent settings, such suboptimality is often acceptable as a trade-off for real-time solutions, and prior work has studied the design of such reduced representations~\cite{larsson2020q, baras2000learning}.
In multi-agent environments, however, employing reduced representations is more delicate, since an agent’s reduced representation, if revealed, may expose structural weaknesses that adversaries can exploit~\cite{Waugh2015AbstractionExploitability, guan2022hierarchical}.
This motivates a novel and underexplored problem: 
how should agents behave strategically to conceal their representations, and how can they infer and exploit their opponents' representations?

Motivated by these questions, we investigate strategic representation concealment within a Bayesian game framework. 
We formulate a defense game, where an Attacker reasons using a private environment representation only known to itself and tries to reach a target. 
The Defender with limited prior knowledge of the Attacker's representation needs to select an obstacle barrier to obstruct the Attacker. 
Consequently, the Attacker needs to strategically conceal its representation and intentionally manipulate the Defender's belief to reduce the risk of exploitation.  
To properly counteract, the Defender needs to infer the Attacker's representation from the Attacker's trajectory. 

Our Bayesian formulation offers two key advantages over existing intent inference and concealment frameworks~\cite{ramirez2010probabilistic, pereira2017landmark, price2023domain}. 
First, it captures the notion of \textit{purposeful concealment}—the Attacker conceals its representation only when doing so improves its performance, rather than merely maximizing the uncertainty or the bias in the Defender's belief system. 
Second, the Bayesian approach avoids prescribing a pre-defined Defender inference algorithm, which is often impractical in adversarial settings~\cite{fudenberg1991game}. 

Previous work has developed efficient algorithms for solving Bayesian games using linear programming~\cite{li2014lp}, common-information decomposition~\cite{ouyang2016dynamic}, and sequential rationalization~\cite{rostobaya2025deceptive}.

In those settings, an agent’s type typically corresponds to latent factors such as its goal or unobserved state. 
In contrast, our setting links the Attacker’s type directly to its environment representation, so that the Attacker's type not only shapes rewards but also restricts the set of admissible strategies.
Specifically, we impose that all states within a `superstate' in a given representation share the same action distribution, which is motivated by principles from minimum-attention control~\cite{brockett1997minimum} and hierarchical learning~\cite{baras2000learning}, where decisions are made at the coarse-resolution superstate level rather than at the finest resolution.
As we later show in this work, the interplay between the agent’s type and its feasible strategies significantly increases the problem complexity, leading to an optimization problem subject to bilinear constraints rather than one with linear constraints as in~\cite{li2014lp}.

\paragraph{Contributions.}

We introduce a novel Bayesian game in which a Defender seeks to infer the representation used by an Attacker, and responds by placing a barrier to delay the Attacker’s progress toward a target. 
If the Defender accurately infers the Attacker’s representation, it can exploit it by placing an effective defensive barrier.
This creates a natural incentive for the Attacker to conceal its representation through purposeful behavioral strategies.

We solve for the Perfect Bayesian Equilibrium via a bilinear program that integrates Bayesian inference, strategic planning, and belief manipulation. 
Our results reveal that the Attacker actively introduces ambiguity into the Defender’s Bayesian belief to mislead the Defender. 
Moreover, the bilinear program provides structural insights into the representation concealment problem: 
the Attacker manipulates the Defender’s beliefs by strategically planning a distribution over trajectories in the game tree to manipulate the Defender's belief.

Simulations demonstrate that such concealment yields significant performance gains compared to following an optimal trajectory that directly reveals the environment representation.

\paragraph{Notation.}
We use $\mathcal{P}(E)$ to denote probability distribution over a finite set $E$.
The notation $[T]$ denotes the set of integers $\{0, \ldots, T\}$.
We use $\bm{v} = [v_i]_{i \in \mathcal{I}}$ to denote the vector whose $i$-th component is $v_i$.
The notation $a = (a_{i,j})_{i \in \mathcal{I},\, j \in \mathcal{J}}$ denotes the collection of elements $a_{i,j}$ indexed by $(i,j) \in \mathcal{I} \times \mathcal{J}$.

\section{Problem Formulation}

We consider a two-player defense game played in a finite grid environment, as illustrated in Figure~\ref{fig:representation-concealment-schematic}. 
An Attacker is tasked with reaching a goal location, while the Defender aims to delay the Attacker by placing barriers in the environment. 
At the beginning of the game, the Attacker is privately assigned a \emph{representation} of the environment, with which the Attacker plans a trajectory to its goal.
Specifically, the environment representation partitions the grid into superstates (regions), constraining the Attacker to use the same action distribution for all positions within the same region—effectively reducing its decision-making choices. 
The Defender, who does not observe the Attacker’s assigned representation, seeks to infer it based on the Attacker's behavior and strategically selects a barrier configuration at a designated time $T$ to delay the Attacker's advancement toward the target. 
The strategic interaction between Attacker's representation concealment and the Defender's inference on the representation type is formalized as a two-phase Bayesian game as follows.

\begin{figure}[b]
    \centering
    \includegraphics[width=0.7\linewidth]{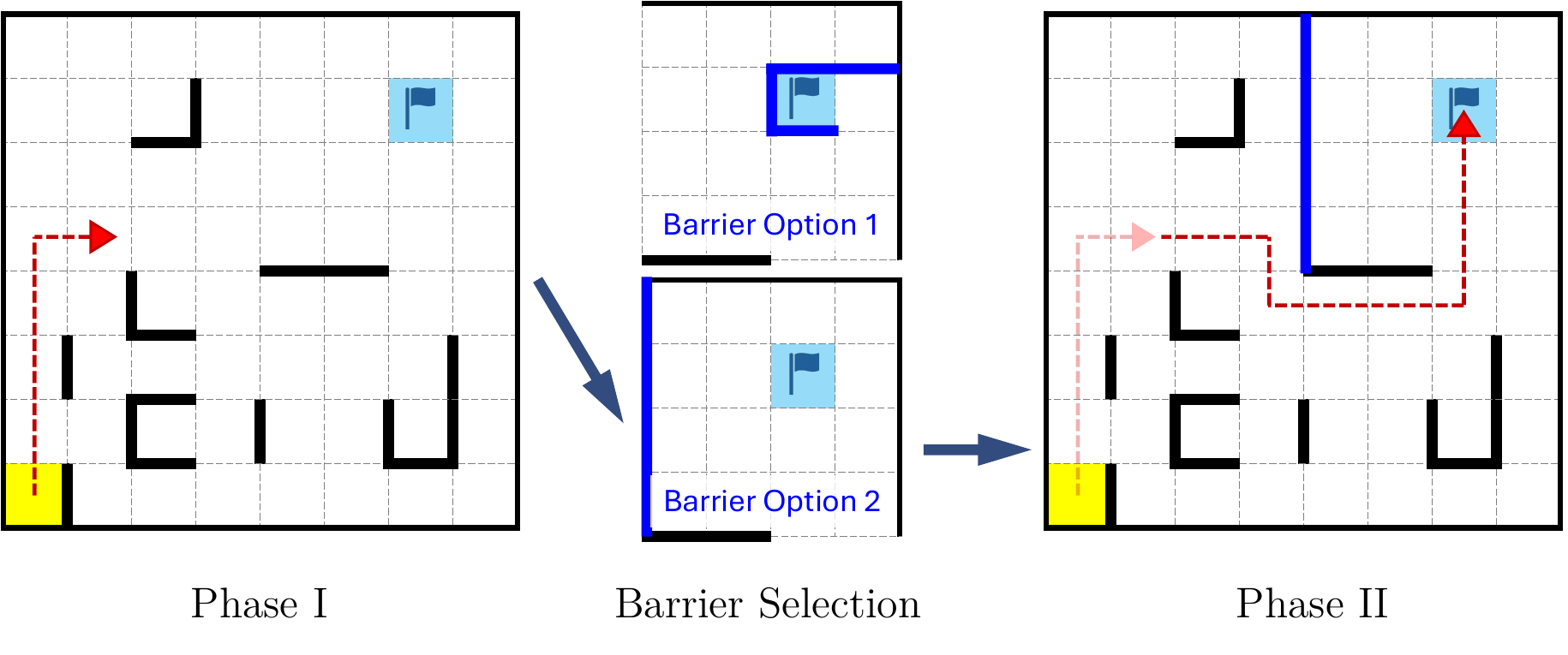}
    \vspace{-0.1in}
    \caption{An example of the proposed two-phase defense game. In Phase~I, the Attacker navigates the grid while the Defender passively observes the state-action trajectory. The middle panel illustrates the two available barrier configurations for the Defender to protect the target. 
    The Defender selects barrier configuration $\omega = 2$ to obstruct the Attacker’s path. In Phase~II, the Attacker continues toward the goal under the modified environment induced by the selected barrier.}
    \vspace{-0.2in}
    \label{fig:representation-concealment-schematic}
\end{figure}

\paragraph{State and action spaces.}
The game state at time $t$ is the Attacker's position $s_t \in \S$, where $\S \subset \mathbb{Z}^2$ is the set of valid grid cells. 
At each time step, the Attacker selects an action $a_t \in \A = \{\texttt{Up,Down,Left,Right}\}$ to transition to $s_{t+1}$, unless blocked by a wall, in which case the Attacker remains at $s_t$.

The Defender selects a barrier configuration $\omega$ at time $T$ from a finite set of allowed configurations $\Omega$. To avoid trivial outcomes, no configuration may fully enclose the goal set $G \subseteq \S$.

\paragraph{Game phases.}
The game unfolds in two sequential phases, separated by a predefined time step $T$.

\begin{itemize}
    \item \textbf{Phase~I} ($t = 0, \ldots, T$): The Attacker moves in the environment, in a manner that is compatible to its assigned representation, while the Defender observes the state-action trajectory.
    \item \textbf{Barrier selection} ($t=T$): After observing the state-action trajectories in Phase~I, the Defender selects and deploys a barrier, thereby altering the traversability of the grid environment. 
    \item \textbf{Phase~II} ($t = T+1, \ldots, \Tf$): 
    With the new barrier added to the environment,  the Attacker continues toward the goal in the new environment until it reaches a goal state $g \in G$. 
    The game terminates at time $\Tf$ when the Attacker reaches a goal state.
\end{itemize}

During Phase~I, the game history is given by $h_t = (s_0, a_0, \ldots, s_{t-1}, a_{t-1}, s_t)$.
For Phase~II, the history is given by $h_t = (s_0, a_0, \ldots, s_{T}, \omega, s_{T+1}, a_{T+1}, \ldots, s_t)$, where $s_{T} = s_{T+1}$ as the Defender is the only player selecting an action at time $T$.
The set of admissible histories at time $t$ is denoted by $\H_t$.
An instance of the proposed two-phase game is presented in Figure~\ref{fig:representation-concealment-schematic}.

\paragraph{Representations.}
A representation $\Gamma$ is a partition of the state space $\S$ into superstates: $\Gamma = \{\gamma^1, \ldots, \gamma^{|\Gamma|}\}$, with $\gamma^i \subseteq \S$ and $\gamma^i \cap \gamma^j = \emptyset$ for $i \neq j$. In this work, we use quad-tree-based representations~\cite{larsson2020q} due to their compatibility with the grid world.
However, our proposed Bayesian framework applies to any form of representation $\Gamma$.

At the beginning of the game, the Attacker is privately assigned a type $\theta \in \Theta$, drawn according to the prior distribution  $\vb_0 \in \mathcal{P}(\Theta)$, where $\theta$ denotes the set of available Attacker types.
Each type $\theta$ uniquely determines a representation $\Gamma^\theta$, 
with the collection of representations given by $(\Gamma^{1}, \ldots, \Gamma^{|\Theta|})$.
A type-$\theta$ Attacker is constrained to select actions such that all states within each superstate $\gamma \in \Gamma^\theta$ share the same action distribution. 
This \textbf{same-action-distribution (SAD)} constraint (formalized later) limits the flexibility of Attacker strategies and introduces opportunities for the Defender to infer and exploit the used representation.

Figure~\ref{fig:example-representations} illustrates how an environment representation can be exploited.
Type-1 representation in subplots (a)–(c) features a coarse resolution near the goal and a fine resolution near the start, whereas type-2 representation in subplots (d)–(e) exhibits the opposite pattern.

Under type-1 representation, barrier option 1 is most effective for the Defender, as it degrades Attacker performance: together with the coarse resolution around the goal, barrier~1 forces the Attacker to rely on a mixed (randomized) strategy to reach the target (dashed trajectory segment).
A poor realization of this mixed strategy is shown in subplot~(b).
In contrast, barrier option 2 is suboptimal for the Defender, as shown in subplot~(c), since it allows deterministic navigation, i.e., a straight path to the goal.

For type-2 representation, the situation reverses.
Barrier option 2 is favorable for the Defender, since it compels the Attacker to adopt a mixed strategy to traverse the fine-resolution region near the goal.
Barrier option 1, however, is suboptimal, as the Attacker can follow straight-line trajectories into the fine-resolution area and then navigate around obstacles with fine maneuvers.

\begin{figure}[b]
    \centering
    \includegraphics[width=\linewidth]{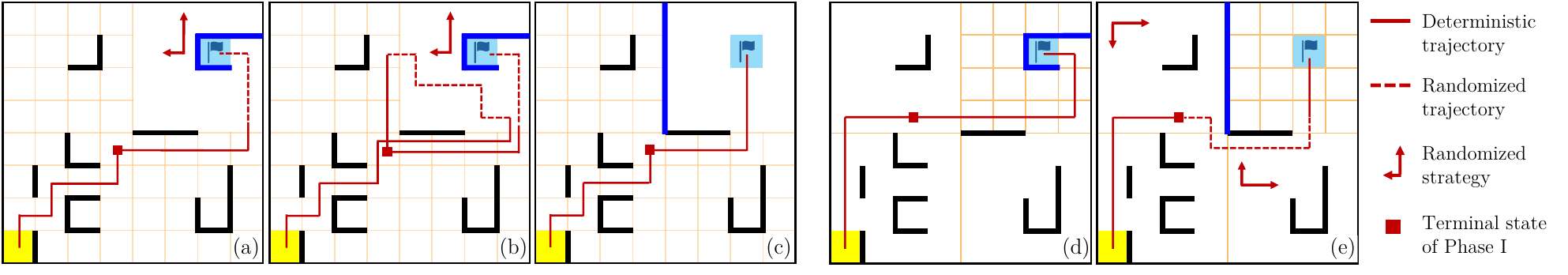}
    \caption{Example of two representations and trajectories under different barrier placement. 
    The solid line shows the trajectory segment that can be achieved with deterministic strategies, while the dashed ones need to be achieved with randomized strategy due to coarse resolution. 
    The randomized strategies are illustrated via the red arrows in the corresponding superstates. 
    Subplots (a) and (b) show different realizations of trajectories from the same mixed strategy.}
    \label{fig:example-representations}
\end{figure}

\paragraph{Information structure.}
The goal set $G$, representations $(\Gamma^\theta)_{\theta=1}^{|\Theta|}$, and barrier configurations $\Omega$ are common knowledge. 
Both players observe the full state-action trajectory, and the Attacker observes the barrier selection $\omega \in \Omega$ at the start of Phase~II.
Crucially, the Attacker's assigned representation is private information for the Attacker, unknown to the Defender.
We further adopt a Stackelberg leader–follower structure: the Attacker (leader) commits to a strategy first, anticipating the Defender’s best response, while the Defender (follower) optimizes given the committed strategy and the induced belief system over the Attacker’s private representation types.

\paragraph{Strategy sets.}
The Attacker’s strategy is a mapping $\pi^\theta_t: \H_t \to \mathcal{P}(\A)$, subject to the \textbf{same-action-distribution} (SAD) constraint: 
for any two histories (in the same phase) ending in states within the same superstate $\gamma \in \Gamma^\theta$, the action distributions must be identical. 
Formally, for Phase~I history $h_t = (s_0, a_0, \ldots, s_t)$ and $h'_\tau = (s'_0, a'_0,\ldots, s'_\tau)$, possibly of different lengths $t, \tau \in [T-1]$, the SAD constraint requires that
\[
    \pi^\theta_t(h_t) = \pi^\theta_\tau(h'_\tau) \qquad \text{if } s_t, s'_\tau \in \gamma, \text{ and } \gamma \in \Gamma^\theta.
\]
Similar SAD constraint applies for the Phase~II histories. 

Since this constraint applies separately in each phase for different history lengths, the Attacker’s strategy can be represented as stationary strategies:
\[
\pi^{\theta}_{\emptyset}, \pi^{\theta}_\omega: \S \to \mathcal{P}(\A),
\]
where $\pi^{\theta}_{\emptyset}$ denotes the type-$\theta$ Attacker’s strategy before the Defender’s barrier selection, and $\pi^{\theta}_{\omega}$ denotes the strategy after barrier $\omega$.
The SAD-constraint requires $\pi^{\theta}_{\emptyset}$ and $\pi^{\theta}_{\omega}$ to satisfy
\begin{equation*}
\pi_\emptyset^\theta(a|s) = \pi_\emptyset^\theta(a|s'), \qquad \pi_\omega^\theta(a|s) = \pi_\omega^\theta(a|s'),
\quad
\forall s, s'\in \gamma,\; \forall a\in\A.
\end{equation*}
Thus, one can equivalently write the stationary strategies as mappings:
$\pi^{\theta}_{\emptyset}, \pi^{\theta}_\omega: \Gamma^\theta \to \mathcal{P}(\A)$.
We denote the type-$\theta$ full strategy across both phases as $\pi^\theta \in \Pi^\theta$, and the complete strategy profile across types as $\pi \!\in\! \Pi$.

The Defender acts only once at $T$, selecting barrier $\omega \in \Omega$ via 
\[
    \sigma:\H_T \to \mathcal{P}(\Omega),
\] 
where $\sigma(\omega|h_T)$ is the probability of choosing barrier $\omega$ given the Phase~I history $h_T$. The Defender’s strategy set is denoted by $\Sigma$.

\begin{remark}
    While the SAD constraint yields stationary Attacker strategies within each phase, the Defender’s strategy remains history-dependent, as the Defender relies on the observed Attacker trajectory to infer the Attacker’s representation and choose a barrier that exploits the SAD-constrained Attacker strategy.
\end{remark}

\paragraph{Optimization and Equilibrium.}
We consider a generic running reward $r(s,a)$, which specifies the instantaneous payoff received by the Attacker when taking action $a$ at state $s$.

Let $(\pi, \sigma)$ denote a pair of Attacker and Defender strategies. 
The expected cumulative reward for a type-$\theta$ Attacker starting at state $s_0$ is given by
\[
    J^\theta(s_0; \pi^\theta, \sigma) = \mathbb{E}_{\pi^\theta, \sigma} \left[ \sum_{t=0}^{\Tf} \beta^t r(s_t, a_t) \big \vert s_0\right],
\]
where $\beta \in (0,1)$ is the discount factor. 
Note that the terminal time $\Tf$ is a random variable that depends on the strategy pair $(\pi^\theta,\sigma)$.

Since the Defender does not observe $\theta$, it evaluates $J^\theta$  with respect to the prior:
\[
    J(s_0; \pi, \sigma, \vb_0) = \sum_{\theta \in \Theta} \vb_0(\theta) J^\theta(s_0; \pi^\theta, \sigma).
\]

Under the Stackelberg game formulation, the Attacker acts as the leader, committing to a strategy first, and the Defender subsequently best responds.
Each type of Attacker maximizes its own objective $J^\theta$, while the Defender minimizes the expected objective $J$.
We define the equilibrium as follows.

\begin{definition}
    An Attacker strategy $\pi^{*} = \big(\pi^{\theta*}\big)_{\theta \in \Theta}$ and a Defender strategy $\sigma^*$ constitutes to a Stackelberg equilibrium if, 
    for all $\pi^\theta \in \Pi^\theta$,
    \begin{subequations}
        \label{eqn:equilibrium-def}
        \begin{align}
            &J^\theta\big(s_0; \pi^{\theta*}, \sigma^*(\pi^*)\big) \geq J^\theta(s_0; \pi^{\theta}, \sigma^*(\pi)), \quad \forall \theta \in \Theta, \\
            &\sigma^*(\pi) \in \argmin_{\sigma \in \Sigma}J(s_0; \pi, \sigma, \vb_0).
        \end{align}
    \end{subequations}
\end{definition}

\section{Phase~II: SAD-Constrained MDP}
\label{sec:cMDP}
Since the game unfolds in two phases, it is natural to adopt a backward induction approach. 
Specifically, we first solve Phase~II, which reduces to a single-agent Markov Decision Process (MDP) given the observed Defender’s barrier placement. 
However, the solution must satisfy the SAD constraint. 
The Phase~II solution, which represents the game’s future outcome, is then used as the boundary conditions in Phase~I. 
We will show that both Phase~I and Phase~II problems can be formulated as bilinear programs, with the bilinear constraints arising from the SAD requirement.

For each Attacker type $\theta \in \Theta$, barrier configuration $\omega \in \Omega$, and Phase~II initial state $s_T \in \S$ (Attacker's position at the end of Phase~I), the Phase~II problem is defined as a constrained MDP given by
\[
\mathrm{cMDP}^\theta_{\omega,s_T} = \langle s_T, \S, \Gamma^\theta, \A, r, f^\omega, \beta \rangle,
\]
where $\S$, $\A$, $r$ and $\beta$ denote the original state space, action space, Attacker reward function, and discount factor, respectively. 
The representation $\Gamma^\theta$ dictates the SAD constraint that the resulting optimal strategy needs to satisfy.
The transition function $f^\omega$ encodes the grid world traversability with the added barriers $\omega$.
Specifically, the transition function is a mapping $f^\omega : \S \times \S \times \A \to [0, 1]$.

\paragraph{Unconstrained MDP.}
Without the SAD constraint, the MDP can be solved via the following bilinear program with the policy $\pi$ and the discounted state occupation measure $d$ as the decision variables, linked by the flow conservation with the initial distribution $\mu$:
\begin{subequations}\label{eq:bilinear-mdp}
\begin{alignat}{2}
\maximize_{\,d,\;\pi}\quad 
& \sum_{s\in\S} d(s)\!\sum_{a\in\A}\! \pi(a| s)\, r(s,a) \\
\mathrm{subject~to}\quad
& d(s) \;=\; \mu(s) + \beta \sum_{s'\in\S}\sum_{a'\in\A} f^{\omega}(s| s',a')\, \pi(a'| s')\, d(s'), 
\qquad && \forall s\in\S, \label{eq:bilinear-flow}\\
& \sum_{a\in\A}\pi(a| s)=1,\quad \pi(a| s)\ge 0, 
\quad && \forall s\in\S,~a\in\A, \\
& d(s)\ge 0, \quad && \forall s\in\S.
\end{alignat}
\end{subequations}
Equivalently, by introducing the state-action occupancy measure
\(\nu(s,a) \coloneqq d(s)\pi(a| s)\),
problem~\eqref{eq:bilinear-mdp} can be reformulated as the following well-known dual linear program for MDPs~\cite{BertsekasTsitsiklis1996}, with $\nu$ as the sole decision variable. 
\begin{subequations} \label{eq100:dual}
\begin{alignat}{2}
    \max_{\nu} \quad & \sum_{s \in \mathcal{S}} \sum_{a \in \mathcal{A}} \nu(s, a) r(s,a) \\
    \mathrm{subject~to} \quad & \sum_{a \in \mathcal{A}} \nu(s, a) = \mu(s) + \beta \sum_{s' \in \mathcal{S}} \sum_{a' \in \mathcal{A}} f^\omega(s | s', a') \nu(s', a'), \quad &&\forall s \in \mathcal{S}, \\
    & \nu(s, a) \geq 0, \quad &&\forall s \in \mathcal{S},~ a \in \mathcal{A}. 
\end{alignat}
\end{subequations}
The optimal solution $(d^*, \pi^*)$ of Problem~\ref{eq:bilinear-mdp} is related to the optimal state-action occupancy measure $\nu^*$ via the following relationships:
\[
d^*(s) = \sum_{a\in \A} \nu^*(s,a),\qquad 
\pi^*(\cdot|s) = 
\begin{cases}
    \frac{\nu^*(s,a)}{d^*(s)} \qquad & \text{ if } d^*(s) >0, \\
    \text{any distribution on $\A$} & \text{ if } d^*(s) = 0.
\end{cases} 
\]

\paragraph{SAD-Constrained MDP.}

Recall that the type-$\theta$ representation $\Gamma^\theta$ induces a disjoint partition on the state space $\S$. 
The SAD constraint requires all states within the same superstate to share a common action distribution.
Formally, for each superstate $\gamma$, the type-$\theta$ Attacker strategy satisfies
\[
\pi^\theta(a|s) =  \pi^\theta(a|s'), \quad \forall s, s'\in \gamma,\; \forall a\in\A.
\]
With $\pi^\theta$ and $d$ as the decision variables, we arrive at the following bilinear program for the SAD-constrained MDPs:
\begin{subequations}\label{eq:sad-bilinear_DM}
\begin{alignat}{2}
\maximize_{\,d,\;\pi^\theta}\quad 
& \sum_{s\in\S} d(s)\!\sum_{a\in\A}\! \pi^\theta(a|s)\, r(s,a) \\
\mathrm{subject~to}\quad
& d(s) \;=\; \mu(s) \;+\; \beta \sum_{s'\in\S}\sum_{a'\in\A} f^{\omega}(s| s',a')\, \pi^\theta(a'|s')\, d(s'), 
\quad && \forall s\in\S, \label{eq:sad-flow}\\
& \sum_{a\in\A}\pi^\theta(a|s)=1,\quad \pi^\theta(a|s)\ge 0, 
\quad && \forall s\in\S,~a\in\A, \\
& d(s)\ge 0, \quad && \forall s\in\S, \\
& \pi^\theta(a|s) =  \pi^\theta(a|s'), \quad &&\forall a\in\A,\; \forall s, s'\in \gamma,\; \forall \gamma\in \Gamma^\theta\; . \label{eq:sad_DM}
\end{alignat}
\end{subequations}

The key difference between Problems~\eqref{eq:bilinear-mdp} and \eqref{eq:sad-bilinear_DM} is the SAD constraint~\eqref{eq:sad_DM}, which encodes how representations restrict feasible strategies.
One can reformulate~\eqref{eq:sad-bilinear_DM} by 
introducing the state–action occupancy measure $\nu^\theta(s,a)$ as we did above in \eqref{eq100:dual},   but unlike the unconstrained case, the problem does not reduce to a linear program because the SAD constraint remains bilinear in terms of $\nu^\theta$.

For unconstrained MDPs, the choice of the initial distribution $\mu$ does not affect the optimal policy, provided that $\mu(s) > 0$ for all $s \in \S$ so that optimality is enforced at every state. 
In contrast, under the SAD constraint the coupling across states makes the optimal solution depend explicitly on $\mu$. 
Intuitively, this arises because the maximization step is no longer performed independently at each state. Instead, a single ‘compromise’ action is chosen based on the average performance across all states in the superstate, weighted by their discounted occupation measure, which in turn depends on the initial distribution.
To obtain optimal values from each Phase~I terminal state $s_T\in\S$, we solve \eqref{eq:sad-bilinear_DM} with $\mu=\mathbf{e}_{s_T}$, which is the one-hot vector for state $s_T$.%
\footnote{
For numerical stability, we add a small constant (e.g., $10^{-5}$) to all states other than $s_T$ in the implementation.}

We solve the bilinear program~\eqref{eq:sad-bilinear_DM} for each representation type 
$\Gamma^\theta$, barrier $\omega \in \Omega$, and Phase~II initial state $s_T \in \S$. 
The resulting optimal value and policy are denoted by $V^{\theta*}_{\omega, s_T}$  and $\pi^{\theta*}_{\omega, s_T}$, respectively. 
Here, $V^{\theta*}_{\omega, s_T}$ is a scalar payoff, while $\pi^{\theta*}_{\omega, s_T}$ is a stationary policy mapping 
$\S \to \mathcal{P}(\A)$. 
The value $V^{\theta*}_{\omega, s_T}$ serves as the terminal payoff of the Phase~I game when the Attacker is of type $\theta$ and the Defender has deployed barrier $\omega$. 
The policy $\pi^{\theta*}_{\omega, s_T}$ specifies the Attacker’s Phase~II strategy that achieves this optimal expected performance if the Attacker lands at $s_T$ at the end of Phase~I.


\section{Phase~I: SAD-Constrained Bayesian Game} 

The key difference between Phase~I and Phase~II lies in the fact that the Defender selects a barrier configuration $\omega \in \Omega$ at time $T$, thereby influencing the outcome of the game. 
Accordingly, Phase~I is modeled as a \emph{game} between the two agents, rather than a single-agent MDP.
Moreover, since the Defender does not observe the representation assigned to the Attacker, Phase~I is an asymmetric-information game~\cite{li2014lp}.

In this setting, the Attacker faces a strategic trade-off between path optimality and the concealment of its representation type. 
For example, the Attacker may deliberately avoid leveraging its fine-resolution representation to navigate around obstacles. 
Instead, it may adopt a suboptimal trajectory that mimics the behavior of a coarser representation in certain regions of the environment. 
The goal is to mislead the Defender into selecting a less effective barrier—one that the Attacker can later easily circumvent.

We formalize this interaction as a \emph{Bayesian game}, wherein the Defender maintains a belief over the Attacker’s representation type and selects the barrier configuration accordingly. 
In turn, the Attacker optimizes its strategy by taking into account the Defender’s belief, balancing path optimality with representation concealment.

The SAD-constrained Bayesian game for Phase~I is given by the following tuple 
\[
    \mathrm{BG} = \big \langle \S, \A, f, r, (V^{\theta*}_{\omega, s})_{s\in\S, \omega \in \Omega}, \Theta, (\Gamma^\theta)_{\theta\in \Theta}, \beta, T \big \rangle,
\]
where $\S$ and $\A$ are the state and action spaces for the Attacker; $f$ is the transition function without the Defender barrier, 
and $r$ is the original reward. 
The terminal values $(V^{\theta*}_{\omega, s})_{s\in\S, \omega \in \Omega}$ come from the Phase~II solution. 
The set $\Theta$ consists of all possible type of the Attackers, and 
$(\Gamma^\theta)_{\theta \in \Theta}$ is the set of representations used by the Attacker. 
The discount factor is denoted as $\beta$ and the terminal time for Phase~I Bayesian game is $T$.

\subsection{Value Functions}

For convenience, we define the following shorthand notations.
Let $\pi^\theta_{h_t} \in \mathcal{P}(\A)$ denote the action distribution of a type-$\theta$ Attacker given the history $h_t$, and let $\pi_{h_t} = (\pi^\theta_{h_t})_{\theta \in \Theta}$ denote the complete strategy profile for all Attacker types with history $h_t$.
Similarly, $\sigma_{h_T} \in \mathcal{P}(\Omega)$ denotes the Defender’s action distribution given the entire Phase~I history $h_T$.
The quantity $b_t^\theta = \vb_t(\theta)$ represents the belief that the Attacker is of type $\theta$ at time $t$.

The type-$\theta$ Attacker value at Phase~I can be easily evaluated by the following recursive formula:
\begin{subequations}
    \begin{align}
        J^\theta_{T}(h_{T},& \sigma_{h_{T}}) = \sum_{\omega \in \Omega}V^{\theta*}_{\omega, s_{T}} \sigma_{h_{T}}(\omega) , \label{eqn:attacker-dp-T}\\
        J^\theta_{T -1}(h_{T - 1}, &\pi_{h_{T - 1}}^\theta) = \sum_{a_{T - 1}\in \A}  \pi_{h_{T - 1}}^{\theta} (a_{T - 1}) \Big(r(s_{T - 1}, a_{T - 1}) + \beta \sum_{s_{T } \in \S}  f(s_{T}|s_{T - 1}, a_{T - 1})  J^\theta_{T}(h_{T}, \sigma_{h_{T}}) \Big), \label{eqn:attacker-dp-T-1} \\
        J^\theta_t(h_t, \pi_{h_t}^\theta) &= \sum_{a_t\in \A}  \pi_{h_t}^{\theta} (a_t) \Big(r(s_t, a_t) + \beta \sum_{s_{t+1} \in \S}  f(s_{t+1}|s_t, a_t)  J^\theta_{t+1}(h_{t+1},\pi_{h_{t+1}}^\theta) \Big), \quad \qquad \forall t \in [T-2], \label{eqn:attacker-dp-t}
    \end{align}
\end{subequations}

where $h_{t+1} = (h_t, a_t, s_{t+1})$ and $V^{\theta*}_{\omega, s}$ is optimal value from the solution of the Phase~II SAD-constrained MDP in Section~\ref{sec:cMDP}.

Next, we derive the recursive formula for the Defender value.
For the terminal time step, we have, by definition,
\begin{equation}
    J_{T}(h_{T}, \vb_{T}, \sigma_{h_{T}}) = \sum_{\theta  \in \Theta} b_T^\theta J^\theta_{T}(h_{T}, \sigma_{h_{T}}) = \sum_{\theta \in \Theta} b^\theta_{T} \sum_{\omega \in \Omega}V^{\theta*}_{\omega, s_{T}} \sigma_{h_{T}}(\omega). \label{eqn:defender-dp-T}
\end{equation}

For time step $t \in [T-2]$, define the reward vector $\bm{r}(s) = [r(s, a)]_{a\in \A}$, then we have that
\begin{align}
    &J_t(h_t, \vb_t, \pi_{h_t}) = \sum_{\theta  \in \Theta} b_t^\theta J^\theta_t(h_t, \pi_{h_t}^\theta) \nonumber\
    \\
    &= \sum_{\theta \in \Theta} b^\theta_t \Big(\bm{r}(s_t)^\top \pi^\theta_{h_t}  + \beta \sum_{s_{t+1} \in \S} \sum_{a_t\in \A}  \pi_{h_t}^{\theta} (a_t) f(s_{t+1}|s_t, a_t)  J^\theta_{t+1}(h_{t+1},\pi_{h_{t+1}}^\theta)\Big) \nonumber
    \\
    &= \sum_{\theta \in \Theta} b^\theta_t \bm{r}(s_t)^\top \pi^\theta_{h_t} +\beta \sum_{s_{t+1} \in \S} \sum_{a_t\in \A} f(s_{t+1}|s_t, a_t) \Big(\sum_{\theta \in \Theta} b_t^\theta \pi_{h_t}^{\theta} (a_t) J^\theta_{t+1}(h_{t+1},\pi_{h_{t+1}}^\theta) \Big) \nonumber
    \\
    &= \sum_{\theta \in \Theta} b^\theta_t \bm{r}(s_t)^\top \pi^\theta_{h_t}  + \beta \sum_{s_{t+1} \in \S} \sum_{a_t\in \A} f(s_{t+1}|s_t, a_t) \underbrace{\bar \chi(\vb_t, \pi_{h_t}, a_t)}_{=\sum_{\theta}b_t^\theta \pi_{h_t}^{\theta} (a_t)} \Big(\sum_{\theta \in \Theta} \underbrace{\frac{b_t^\theta \pi_{h_t}^{\theta} (a_t)}{\bar \chi(\vb_t, \pi_{h_t}, a_t)}}_{b^{\theta+}(\vb_t, \pi_{h_t}, a_t)} J^\theta_{t+1}(h_{t+1},\pi_{h_{t+1}}^\theta) \Big) \nonumber
    \\
    &= \sum_{\theta \in \Theta} b^\theta_t \bm{r}(s_t)^\top \pi^\theta_{h_t} + \beta \sum_{s_{t+1} \in \S} \sum_{a_t\in \A} f(s_{t+1}|s_t, a_t) \bar \chi(\vb_t, \pi_{h_t}, a_t) \Big(\underbrace{\sum_{\theta \in \Theta} b^{\theta+}(\vb_t, \pi_{h_t}, a_t) J^\theta_{t+1}(h_{t+1},\pi_{h_{t+1}}^\theta)}_{\triangleq J_{t+1}(h_{t+1}, \vb^{+}(\vb_t, \pi_{h_t}, a_t), \pi_{h_{t+1}})} \Big)\nonumber
    \\
    &=\sum_{\theta \in \Theta} b^\theta_t \bm{r}(s_t)^\top \pi^\theta_{h_t} + \beta \sum_{s_{t+1} \in \S} \sum_{a_t\in \A} f(s_{t+1}|s_t, a_t) \bar \chi(\vb_t, \pi_{h_t}, a_t) J_{t+1}(h_{t+1}, \vb^+(\vb_t, \pi_{h_t}, a_t), \pi_{h_{t+1}}), \label{eqn:defender-dp-t}
\end{align}
where we use notations $\vb_t = [b^\theta_t]_{\theta \in \Theta}$ and $\vb^+(\cdot) = \big(b^{\theta+}(\cdot)\big)_{\theta \in \Theta}$ for the belief update performed on each type.

Similarly, for time $T-1$ we have
\begin{align*}
     J_t(h_{T-1}, \vb_{T-1}, &\pi_{h_{T-1}})
    = \sum_{\theta \in \Theta} b^\theta_{T-1} \bm{r}(s_{T-1})^\top \pi^\theta_{h_{T-1}} \\
    &+ \beta \hspace{-0.05in}\sum_{s_{T} \in \S} \sum_{a_{T-1}\in \A} \hspace{-0.05in}f(s_{T}|s_{T-1}, a_{T-1}) \bar \chi(\vb_{T-1}, \pi_{h_{T-1}}, a_{T-1}) J_{T}(h_{T}, \vb^+(\vb_{T-1}, \pi_{h_{T-1}}, a_{T-1}), \sigma_{h_T})
\end{align*}

The term $\vb^+(\vb_t, \pi_{h_t}, a_t) = \Big[ b^{\theta+}(\vb_t, \pi_{h_t}, a_t) \Big]_{\theta \in \Theta} \in \mathcal{P}(\Theta)$, is the Bayesian belief update rule for the Defender. Specifically, we have
\begin{equation}
    \label{eqn:belief-update}
    b^\theta_{t+1} = b^{\theta+}(\vb_t, \pi_{h_t}, a_t) = \frac{b_t^\theta \pi_{h_t}^{\theta} (a_t)}{\bar \chi(\vb_t, \pi_{h_t}, a_t)} = \frac{b_t^\theta \pi_{h_t}^{\theta} (a_t)}{\sum_{\theta'} b_t^{\theta'} \pi_{h_t}^{\theta'} (a_t)},
\end{equation}
where $\bar \chi(\vb_t, \pi_{h_t}, a_t)$ corresponds to the total probability that the Attacker selects action $a_t$ given Defender's current  belief $\vb_t$, the Attacker strategy $\pi_{h_t}$ and the current history $h_t$.
The numerator $b_t^\theta \pi_{h_t}^{\theta} (a_t)$ gives the probability that action $a_t$ is selected by a type-$\theta$ Attacker.
Consequently, $\vb^{+}(\vb_t, \pi_{h_t}, a_t)$ representing the updated belief about the Attacker's type given the observed action $a_t$.
For a detailed discussion of the derivation of the update rule in~\eqref{eqn:belief-update}, refer to Appendix~\ref{appdx-sec:belief-update}.

\begin{remark}
    For clarity of presentation, the derivation in~\eqref{eqn:defender-dp-t} omits the case where $\bar \X(\vb_t, \pi_{h_t}, a_t) = 0$. 
    As discussed later in Remark~\ref{rmk:off-equilibrium}, this case corresponds to off-equilibrium paths and thus does not directly affect the equilibrium outcome. 
    The division by $\bar \chi$ is used only to normalize the belief in constructing the dynamic program. 
    When constructing the solution, however, we multiply $\bar \chi$ back with the $\vb^+$ term, so the normalization step is not necessary. 
\end{remark}

\subsection{Perfect Bayesian Stackelberg Equilibrium (PBSE)}

With asymmetric information, the Defender must update its belief about the Attacker’s type from observed trajectories, while the Attacker, though knowing its own type, must account for the Defender’s evolving belief when planning its actions.
This interplay is captured by the Perfect Bayesian Stackelberg equilibrium (PBSE), which requires agents to act rationally given the game state and their beliefs, while ensuring consistency between strategies and Bayesian belief updates.

\begin{definition}
    \label{def:strong-PBSE}
    A strategy profile $(\pi^*, \sigma^*)$ forms a Perfect Bayesian Stackelberg Equilibrium (PBSE) if the following holds for all histories $h_t \in \H_t$, and for all time steps $t \in [T]$. 
\begin{enumerate}
    \item[a)] $J^\theta_t(h_t, \pi_{h_t}^{\theta*}) \geq J^\theta_t(h_t, \pi_{h_t}^{\theta})$ for all $\theta \in \Theta$ and $\pi_{h_t}^\theta \in \mathcal{P}(\A)$ and $t\in [T -1]$;
    \item[b)] $\sigma^*_{h_{T}} (\pi) \in \argmin_{\sigma \in \Sigma} J_{T}(h_{T}, \vb_{T}(\pi), \sigma_{h_{T}})$ ;
    \item[c)] The belief $\vb_{T}$ above is propagated with Bayes' rule under the history $h_{T}$ conditioned on the strategy $\pi^{*}$.
\end{enumerate}
\end{definition}

Note that the PBSE is defined via the optimality for each type of Attacker, and thus the optimization objective for the Attacker differs from that of the Defender.
To obtain a unified max-min dynamic program, we present the following proposition.
\begin{restatable}{proposition}{eqv}
    \label{prop:equivalence}
    If $\vb_t > \mathbf{0}$ element-wise, then 
    \begin{alignat}{2}
        &J^\theta_t(h_t, \pi_{h_t}^{\theta*}) \geq J^\theta_t(h_t, \pi_{h_t}^{\theta}), \qquad && \forall \theta \in \Theta, \pi^\theta_{h_t} \in \mathcal{P}(\A), \nonumber \\
        \makebox[0pt][l]{\hspace*{-1.1in} if and only if }
        &\sum_{\theta} b_t^\theta J^\theta_t(h_t, \pi_{h_t}^{\theta*}) \geq \sum_{\theta} b_t^\theta  J^\theta_t(h_t, \pi_{h_t}^{\theta}), \qquad && \forall \pi_{h_t} = \big(\pi_{h_t}^\theta\big)_{\theta \in \Theta} \in \big(\mathcal{P}(\A)\big)^{|\Theta|}. \label{eqn:averaged-condition}
    \end{alignat}
\end{restatable}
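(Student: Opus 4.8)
The plan is to prove the two implications separately, exploiting the structural fact---visible in the recursions \eqref{eqn:attacker-dp-T}--\eqref{eqn:attacker-dp-t} and in the very form of the averaged objective---that each type-$\theta$ Attacker value $J^\theta_t(h_t,\cdot)$ is a function of type $\theta$'s own action distribution $\pi^\theta_{h_t}$ alone, with the strategies of the other types $\theta'\neq\theta$ not appearing at all. Hence $J_t(h_t,\vb_t,\pi_{h_t})=\sum_{\theta}b_t^\theta J^\theta_t(h_t,\pi^\theta_{h_t})$ is a $\vb_t$-weighted sum of terms that decouple across the coordinates of the profile $\pi_{h_t}\in(\mathcal{P}(\A))^{|\Theta|}$, which is exactly what makes a coordinate-wise deviation argument available.

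For the forward direction, I assume the per-type inequalities hold, fix an arbitrary profile $\pi_{h_t}=(\pi^\theta_{h_t})_{\theta}$, multiply $J^\theta_t(h_t,\pi^{\theta*}_{h_t})\ge J^\theta_t(h_t,\pi^\theta_{h_t})$ by $b_t^\theta\ge 0$, and sum over $\theta\in\Theta$; this yields \eqref{eqn:averaged-condition} directly, and in fact only uses $\vb_t\ge\mathbf{0}$. For the reverse direction, I assume \eqref{eqn:averaged-condition} holds for every $\pi_{h_t}\in(\mathcal{P}(\A))^{|\Theta|}$, fix an arbitrary type $\bar\theta\in\Theta$ and an arbitrary $\tilde\pi\in\mathcal{P}(\A)$, and instantiate \eqref{eqn:averaged-condition} at the profile that deviates only in coordinate $\bar\theta$, i.e.\ $\pi^{\bar\theta}_{h_t}=\tilde\pi$ and $\pi^{\theta}_{h_t}=\pi^{\theta*}_{h_t}$ for $\theta\neq\bar\theta$. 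By the decoupling, all summands with $\theta\neq\bar\theta$ coincide on both sides and cancel, leaving $b_t^{\bar\theta}J^{\bar\theta}_t(h_t,\pi^{\bar\theta*}_{h_t})\ge b_t^{\bar\theta}J^{\bar\theta}_t(h_t,\tilde\pi)$; dividing by $b_t^{\bar\theta}>0$ and then letting $\bar\theta$ and $\tilde\pi$ range over $\Theta$ and $\mathcal{P}(\A)$ recovers the per-type conditions.

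The argument is elementary, so the statement is really a bookkeeping lemma rather than a deep result; the only points that deserve a sentence of care are the decoupling claim (so that the canceled terms on the two sides of \eqref{eqn:averaged-condition} genuinely match) and the observation that strict positivity of the belief is used---and is genuinely needed---only in the reverse implication: a type carrying zero belief mass drops out of the averaged objective entirely, so its individual optimality cannot be reconstructed from \eqref{eqn:averaged-condition}.
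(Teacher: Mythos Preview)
Your proposal is correct and follows essentially the same approach as the paper: the forward direction by nonnegative weighted summation, and the reverse direction by a single-coordinate deviation that isolates one type's term and then divides by $b_t^{\bar\theta}>0$. The only cosmetic difference is that the paper phrases the reverse implication as a proof by contradiction, whereas you argue it directly; the underlying idea (decoupling plus unilateral deviation) is identical.
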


\begin{proof}
    See Appendix~\ref{appdx-sec:prop-1}.
\end{proof}

\begin{remark}
    \label{rmk:off-equilibrium}
    There are two cases in which $b^\theta_t = 0$:
    \begin{enumerate}[i)]
        \item Prior $b^\theta_0 = 0$ for some type $\theta$. 
        Then, the belief on $\theta$ remains zero throughout the game. 
        In this degenerate case, $\theta$ can be excluded from the set of candidate representations without loss of generality.
        Hence, we assume $b^\theta_0 > 0$ for all $\theta \in \Theta$.
        \item Prior $b^\theta_0 > 0$ for all $\theta \in \Theta$, but the realized trajectory $h_t$ results in $b^\theta_t = 0$. 
        From the belief update rule~\eqref{eqn:belief-update}, one concludes that
        such a trajectory has zero probability under the policy $\pi_{h_t}^\theta$. 
        If $\pi_{h_t}^\theta$ is the considered equilibrium strategy, it follows that the trajectory $h_t$ is an off-equilibrium path; that is, under equilibrium play, such a path is realized with zero probability.
    \end{enumerate}
\end{remark}

In light of Proposition~\ref{prop:equivalence}, the per-type condition (\textit{a}) on $J^\theta_t$ in Definition~\ref{def:strong-PBSE} can be replaced by the averaged condition in~\eqref{eqn:averaged-condition}. 
However, this substitution only considers belief update along trajectories with nonzero probability under the equilibrium strategies.

\begin{definition}
    \label{def:weak-PBSE}
    A strategy pair $(\pi^*, \sigma^*)$ forms a \emph{weak} Perfect Bayesian Stackelberg Equilibrium if the following holds for all histories $h_t \in \H_t$ with non-zero probability under $(\pi^*, \sigma^*)$.
\begin{enumerate}
    \item[a)] $J_t(h_t, \vb_t, \pi^*_{h_t}) \geq J_t(h_t, \vb_t, \pi_{h_t})$ for $\pi_{h_t} \in \Big(\mathcal{P}(\A)\Big)^{|\Theta|}$ and $t\in [T -1]$;
    \item[b)] $\sigma^*_{h_{T}} (\pi) \in \argmin_{\sigma \in \Sigma} J_{T}(h_{T}, \vb_{T}(\pi), \sigma_{h_{T}})$ ;
    \item[c)] The beliefs $\vb_{t}$ are propagated with Bayes' rule under the history $h_{t}$ conditioned on the strategy $\pi^{*}$.
\end{enumerate}
\end{definition}

\subsection{Optimal Value Functions}
Definition~\ref{def:weak-PBSE} unifies the optimization objectives of the Attacker and Defender, yielding a min–max zero-sum formulation amenable to a dynamic programming solution.

\begin{align}
    &\underline{J}_T^*(h_T, \vb_T) = \min_{\sigma_{h_T}} \sum_{\theta \in \Theta} b^\theta_T {\bm{V}^{\theta}_{s_{T}}}^\top \sigma_{h_T},\label{eqn:max-min-dp-T} \\
    &\underline{J}^*_t(h_t, \vb_t) = \max_{\pi_{h_t}} ~~ \sum_{\theta \in \Theta} b^\theta_t \bm{r}(s_t)^\top \pi^\theta_{h_t}
    + \beta \hspace{-0.1in}\sum_{s_{t+1} \in \S} \sum_{a_t\in \A} f(s_{t+1}|s_t, a_t) \bar \chi(\vb_t, \pi_{h_t}, a_t) \underline{J}^*_{t+1}(h_{t+1}, \vb^+(\vb_t, \pi_{h_t}, a_t)),\label{eqn:max-min-dp-t}
\end{align}
where $\bm{V}^{\theta*}_{s_{T}} = [V^{\theta*}_{\omega, s_{T}}]_{\omega \in \Omega} \in \mathbb{R}^{|\Omega|}$ is the vector of the terminal values of Phase~I. 

Finally, we present the following scaling property of the optimal value function that is key to the derivation of the bilinear program solution of the above min-max optimization problem. 

\begin{lemma}
    \label{lmm:linear-value}
    For any constant $\alpha >0$ and time step $t \in [T]$, the game value $J_t^*(h_t, \vb_t)$ satisfies 
    \[J_t^*(h_t, \alpha \vb_t) = \alpha J_t^*(h_t, \vb_t).\]
\end{lemma}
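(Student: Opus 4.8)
The natural approach is backward induction on $t$, exploiting the recursive definitions~\eqref{eqn:max-min-dp-T}--\eqref{eqn:max-min-dp-t} and the key observation that the Bayesian belief update $\vb^+(\vb_t,\pi_{h_t},a_t)$ is invariant under positive scaling of $\vb_t$, while the ``total action probability'' $\bar\chi(\vb_t,\pi_{h_t},a_t)$ scales linearly. Concretely, from~\eqref{eqn:belief-update} one has $\bar\chi(\alpha\vb_t,\pi_{h_t},a_t) = \sum_{\theta'}\alpha b_t^{\theta'}\pi_{h_t}^{\theta'}(a_t) = \alpha\,\bar\chi(\vb_t,\pi_{h_t},a_t)$, and the $\alpha$ in numerator and denominator of $b^{\theta+}$ cancels, so $\vb^+(\alpha\vb_t,\pi_{h_t},a_t) = \vb^+(\vb_t,\pi_{h_t},a_t)$. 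These two facts are the engine of the proof.

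\textbf{Base case} ($t=T$): From~\eqref{eqn:max-min-dp-T}, $\underline J_T^*(h_T,\alpha\vb_T) = \min_{\sigma_{h_T}}\sum_\theta \alpha b_T^\theta {\bm V_{s_T}^{\theta}}^\top\sigma_{h_T} = \alpha\min_{\sigma_{h_T}}\sum_\theta b_T^\theta {\bm V_{s_T}^{\theta}}^\top\sigma_{h_T} = \alpha\underline J_T^*(h_T,\vb_T)$, since $\alpha>0$ factors out of the objective and does not affect the argmin over $\sigma_{h_T}\in\mathcal P(\Omega)$.

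\textbf{Inductive step}: Assume $\underline J_{t+1}^*(h_{t+1},\alpha\vb) = \alpha\underline J_{t+1}^*(h_{t+1},\vb)$ for all admissible $h_{t+1}$ and all $\vb$. Plug $\alpha\vb_t$ into~\eqref{eqn:max-min-dp-t}. The first term $\sum_\theta \alpha b_t^\theta\bm r(s_t)^\top\pi_{h_t}^\theta$ is manifestly $\alpha$ times the original. In the second term, for each $(s_{t+1},a_t)$ we get $\bar\chi(\alpha\vb_t,\pi_{h_t},a_t)\,\underline J_{t+1}^*(h_{t+1},\vb^+(\alpha\vb_t,\pi_{h_t},a_t)) = \alpha\bar\chi(\vb_t,\pi_{h_t},a_t)\,\underline J_{t+1}^*(h_{t+1},\vb^+(\vb_t,\pi_{h_t},a_t))$, using the scaling of $\bar\chi$, the scale-invariance of $\vb^+$, and the induction hypothesis applied with the carried-over scalar $\alpha$. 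Hence the entire objective in~\eqref{eqn:max-min-dp-t} is $\alpha$ times the objective evaluated at $\vb_t$, and since $\alpha>0$ the maximizer over $\pi_{h_t}\in(\mathcal P(\A))^{|\Theta|}$ is unchanged; therefore $\underline J_t^*(h_t,\alpha\vb_t) = \alpha\underline J_t^*(h_t,\vb_t)$.

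\textbf{Anticipated obstacle.} The only delicate point is the case $\bar\chi(\vb_t,\pi_{h_t},a_t) = 0$, where $\vb^+$ is nominally $0/0$. As noted in the remark following~\eqref{eqn:defender-dp-t}, this corresponds to off-equilibrium branches and the product $\bar\chi\cdot\underline J_{t+1}^*(\cdot,\vb^+)$ is defined by continuity (or by the unnormalized convention in which one multiplies $\bar\chi$ back into the belief vector). Under either convention the scaling $\bar\chi(\alpha\vb_t,\cdot)=\alpha\bar\chi(\vb_t,\cdot)$ still holds and the unnormalized belief $\bar\chi\cdot\vb^+ = (b_t^\theta\pi_{h_t}^\theta(a_t))_\theta$ scales linearly in $\alpha$, so the product still scales by $\alpha$; I would phrase the proof in terms of this unnormalized quantity to sidestep the division entirely. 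Everything else is a routine factor-out-$\alpha$ computation.
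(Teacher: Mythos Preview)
Your proof is correct, but it takes a longer route than the paper's. The paper dispatches the lemma in one line by appealing directly to the \emph{non-recursive} definition $J_t(h_t,\vb_t,\pi_{h_t}) = \sum_{\theta} b_t^\theta\, J_t^\theta(h_t,\pi_{h_t}^\theta)$ (and the analogous expression at $t=T$): since each $J_t^\theta$ depends only on the history and the strategies---not on $\vb_t$---the map $\vb_t \mapsto J_t$ is linear, and optimizing over strategy sets that do not depend on $\vb_t$ preserves positive homogeneity for $\alpha>0$. Your argument instead works through the dynamic-programming recursion~\eqref{eqn:max-min-dp-T}--\eqref{eqn:max-min-dp-t} by backward induction, tracking how $\bar\chi$ scales linearly and $\vb^+$ is scale-invariant. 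Both are valid; the paper's is shorter because it bypasses the belief-update machinery entirely, while yours has the modest virtue of making explicit why the recursion itself respects the scaling---precisely the property exploited immediately afterward in deriving the linear program. One small remark: in your normalized inductive step the induction hypothesis is not actually invoked, since once $\vb^+(\alpha\vb_t,\cdot)=\vb^+(\vb_t,\cdot)$ the factor $\alpha$ comes solely from $\bar\chi$; the hypothesis is genuinely needed only in the unnormalized variant you describe at the end.
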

\begin{proof}
    The result follows directly from the definition $J_t(h_t, \vb_t, \pi_{h_t}) = \sum_{\theta  \in \Theta} \vb_t(\theta) J^\theta_t(h_t, \pi_{h_t}^\theta)$ for $t \in [T-1]$, and $J_T(h_T, \vb_T, \sigma_{h_t}) = \sum_{\theta  \in \Theta} b^\theta_T  {\bm{V}^{\theta}_{s_{T}}}^\top \sigma_{h_T}$. 
\end{proof}

\subsection{Linear-Program Solution without SAD Constraint}
We start with the optimization problem that provides the optimal strategy of the Attacker.
We will first ignore the SAD constraint due to representations and focus on the belief propagation and the resulting $T$-step optimization, after which we will add the SAD constraints.

\paragraph{Terminal time \texorpdfstring{$T$}{T}.}
At the Phase~I terminal time $T$, the Defender performs the following minimization problem.
\begin{alignat*}{3}
    J_{T}^*(h_{T}, \vb_{T}) &= \quad \minimize_{\sigma_{h_{T}}}  \quad && \sum_{\theta \in \Theta} b_T^\theta {\bm{V}^{\theta}_{s_{T}}}^\top \sigma_{h_T} \\
    &  \qquad \mathrm{subject}~\mathrm{to} ~~ && \mathbf{1}^\top \sigma_{h_T} = 1, ~~ \sigma_{h_T} \geq \mathbf{0}.
\end{alignat*}
\noindent
Through the standard duality result~\cite{bertsimas1997introduction}, we have the following equivalent linear optimization problem. 
\begin{equation}
    \label{eqn:primal-T}
    \begin{alignedat}{3}
    J_{T}^*(h_{T}, \vb_{T}) &= \quad \maximize_{z_{h_T}, \ell_{h_{T}}}  \quad && \ell_{h_T} \\
    &  \qquad \mathrm{subject}~\mathrm{to} ~~ && \sum_{\theta \in \Theta} z_{h_T}^\theta \bm{V}^{\theta}_{s_{T}} \geq \ell_{h_T} \bm{1}, \\
    &       && z_{h_T}^\theta = b_T^\theta,  \qquad\qquad\qquad \qquad  && \forall \theta \in \Theta,
\end{alignedat}
\end{equation}
where for consistency with later derivations, we make the change of variable $z_{h_T}^\theta = b_T^\theta$, and use the shorthand notation $z_{h_T} = (z^\theta_{h_T})_{\theta \in \Theta}$.

\paragraph{Time step \texorpdfstring{$T-1$}{T-1}.}
Now consider the maximization problem performed by the Attacker at time $T-1$. From Lemma~\ref{lmm:linear-value}, it follows that
\begin{align*}
    &J^{*}_{T-1}(h_{T-1}, \vb_{T-1})   \\
    &= \max_{\pi_{h_{T-1}}} \sum_{\theta \in \Theta} b^\theta_{T-1} \bm{r}(s_{T-1})^\top \pi^\theta_{h_{T-1}} \\
    & \qquad \qquad\qquad\qquad + \beta  \sum_{s_{T}} \sum_{a_{T-1}} f(s_{T}|s_{T-1}, a_{T-1}) \bar \chi(\vb_{T-1}, \pi_{h_{T-1}}, a_{T-1}) {J}^*_{T}(h_{T}, \vb^+(\vb_{T-1}, \pi_{h_{T-1}}, a_{T-1})) \\
    &= \max_{\pi_{h_{T-1}}} \sum_{\theta \in \Theta} b^\theta_{T-1} \bm{r}(s_{T-1})^\top \pi^\theta_{h_{T-1}} +  \sum_{s_{T}} \sum_{a_{T-1}}  {J}^*_{T}(h_{T}, \beta f_{s_{T},s_{T-1}}^{a_{T-1}} \bar \chi(\vb_{T-1}, \pi_{h_{T-1}}, a_{T-1}) \vb^+(\vb_{T-1}, \pi_{h_{T-1}}, a_{T-1})) \\
    &= \max_{\pi_{h_{T-1}}} \sum_{\theta \in \Theta} b^\theta_{T-1} \bm{r}(s_{T-1})^\top \pi^\theta_{h_{T-1}} +  \sum_{s_{T}} \sum_{a_{T-1}}  \underbrace{{J}^*_{T}\Big(h_{T},  \big[\beta b_{T-1}^\theta f^{a_{T-1}}_{s_T, s_{T-1}} \pi^\theta_{h_{T-1}}(a_{T-1})\big]_{\theta \in \Theta}\Big)}_{A\big(\pi_{h_{T-1}}(a_{T-1})\big)},
\end{align*}
where $h_T = (h_{T-1}, a_{T-1}, s_T)$ is the new history induced after applying action $a_{T-1}$ and transitioning to $s_T$.
We use $f_{s_{T},s_{T-1}}^{a_{T-1}}$ to denote the transition probability $f(s_{T}|s_{T-1}, a_{T-1})$.
The policy $\pi_{h_{T-1}}$ also needs to satisfy the standard probability constraints as well as the SAD constraint.

We first analyze the term $A$, the optimal terminal value under the belief induced by policy $\pi_{h_{T-1}}$ and selected action $a_{T-1}$. 
We evaluate the optimal value under a fixed choice of $\pi_{h_{T-1}}$. 
Plugging in the induced (un-normalized) belief into~\eqref{eqn:primal-T}, we have

\begin{equation*}
    \begin{alignedat}{3}
    A\big(\pi_{h_{T-1}} (a_{T-1})\big) &= \quad \maximize_{\substack{z_{h_T}, \pi_{h_{T-1}},\\ \ell_{h_T}}}  \quad && \ell_{h_T} \\
    &  \qquad \mathrm{subject}~\mathrm{to} ~~ && \sum_{\theta \in \Theta} z_{h_T}^\theta \bm{V}^{\theta}_{s_{T}} \geq \ell_{h_T} \bm{1}, \\
    &           && z_{h_T}^\theta = \beta f^{a_{T-1}}_{s_T, s_{T-1}}  b_{T-1}^\theta \pi^\theta_{h_{T-1}}(a_{T-1}), \qquad \qquad && \forall \theta \in \Theta, \\
    &           && \bm{1}^\top \pi^\theta_{h_{T-1}} = 1, ~~ \pi^\theta_{h_{T-1}} \geq 0, && \forall \theta \in \Theta.
\end{alignedat}
\end{equation*}
For clarity of presentation, we again perform the change of variables $z^\theta_{h_{T-1}} = b_{T-1}^\theta \pi^\theta_{h_{T-1}}$,
which represents the {history-action occupation measures}.
Under this change of variables,  the optimization problem becomes:
\begin{equation}
    \label{eqn:primal-A-T-1}
    \begin{alignedat}{3}
    A\big(z_{h_{T-1}} (a_{T-1})\big) &= \quad \maximize_{\substack{z_{h_T}, z_{h_{T-1}},\\ \ell_{h_T}}}  \quad && \ell_{h_T} \\
    &  \qquad \mathrm{subject}~\mathrm{to} ~~ && \sum_{\theta \in \Theta} z_{h_T}^\theta \bm{V}^{\theta}_{s_{T}} \geq \ell_{h_T} \bm{1}, \\
    &           && z_{h_T}^\theta =  \beta f^{a_{T-1}}_{s_T, s_{T-1}} z^\theta_{h_{T-1}}(a_{T-1}),\qquad \qquad && \forall \theta \in \Theta,\\
    &           && \mathbf{1}^\top z^\theta_{h_{T-1}} = b_{T-1}^\theta, ~~ z^\theta_{h_{T-1}} \geq 0,  \qquad && \forall \theta \in \Theta.
\end{alignedat}
\end{equation}

Leveraging the new formulation in~\eqref{eqn:primal-A-T-1}, and maximizing with respect to the variable $z_{h_{T-1}} = (z_{h_{T-1}}^\theta)_{\theta}$, we obtain the following maximization problem for $J^*_{T-1}$.
Noting that the summations over $a_{T-1}$ and $s_T$ are equivalent to summing over all feasible successor histories $h_T$ starting from $h_{T-1}$, we obtain

\begin{subequations}
    \begin{alignat}{3}
    J^{*}_{T-1}(h_{T-1}, \vb_{T-1}) &= \quad \maximize_{\substack{z_{h_T}, z_{h_{T-1}},\\ \ell_{h_T}, \ell_{h_{T-1}}}}  \quad && \ell_{h_{T-1}} ~+ \sum_{h_{T} \in \H_{T}}\ell_{h_T} \nonumber \\
    &  \qquad \mathrm{subject}~\mathrm{to} ~~ && \ell_{h_T} \bm{1} \leq \sum_{\theta \in \Theta} \bm{V}^{\theta}_{s_{T}} z_{h_T}^\theta , && \forall h_T \in \H_T, \label{eqn:primal-T-1-a}
    \\
    &           && z_{h_T}^\theta =  \beta f^{a_{T-1}}_{s_T, s_{T-1}} z^\theta_{h_{T-1}}(a_{T-1}), ~~\qquad \qquad \qquad && \forall \theta \in \Theta,~ \forall h_T \in \H_T,
    \label{eqn:primal-T-1-b}
    \\
    & && \ell_{h_{T-1}} = \sum_{\theta\in \Theta}  \bm{r}(s_{T-1})^\top z_{h_{T-1}}^\theta , \label{eqn:primal-T-1-c}
    \\
    &           && \mathbf{1}^\top z^\theta_{h_{T-1}} = b_{T-1}^\theta, ~~~ z^\theta_{h_{T-1}} \geq 0,   \qquad && \forall \theta \in \Theta.     \label{eqn:primal-T-1-d}
\end{alignat}
\end{subequations}

Constraint~\eqref{eqn:primal-T-1-a} corresponds to the optimization performed by the Defender at time $T$; 
constraint~\eqref{eqn:primal-T-1-b}  captures the influence of the policy $z_{h_{T-1}}$ on the belief at time $T$, and thus affects the optimization at the next stage of the game;
in constraint~\eqref{eqn:primal-T-1-c},
we have $\bm{r}(s_{T-1}) = [r(s_{T-1}, a)]_{a \in \A}$;
finally, constraint~\eqref{eqn:primal-T-1-d}  encodes the starting belief at $T-1$.

\paragraph{Generic primal problem.}
For the entire $T$-stage game, the primal optimization problem is given by 
\begin{equation}
    \label{eqn:primal}
    \begin{alignedat}{3}
        J^{*}(s_0, \vb_{0})  &= \maximize_{\substack{z_{h_t}, \ell_{h_t}\\
        \forall t\in [T]}}  \quad && \sum_{t=0}^T \sum_{h_{t} \in \H_{t}} \ell_{h_t}
        \\
        & \qquad \mathrm{subject}~\mathrm{to}~~~ && \ell_{h_0} =\sum_{\theta\in \Theta}  \bm{r}(s_{0})^\top z_{h_{0}}^\theta ,
        \\
        & &&\mathbf{1}^\top z^\theta_{h_0} = b^\theta_0, ~~ z^\theta_{h_0} \geq \mathbf{0}, ~~ &&\forall~ \theta\in \Theta,
        \\[0.15in]
        & &&\forall \; t = 1, \ldots, T-1, ~ h_t \in \H_t \\
        & &&\ell_{h_t} =\sum_{\theta\in \Theta}  \bm{r}(s_{})^\top z_{h_{t}}^\theta , \\
        & &&\mathbf{1}^\top z^\theta_{h_t} = \beta f_{s_{t},s_{t-1}}^{a_{t-1}} z^\theta_{h_{t-1}}(a_{t-1}), \qquad  \qquad && \forall~ \theta\in \Theta, 
        \\[0.15in]
        &              &&\ell_{h_T} \bm{1} \leq \sum_{\theta \in \Theta} z_{h_T}^\theta \bm{V}^{\theta}_{s_{T}}  , && \forall h_T \in \H_T,\\
    &           && z_{h_T}^\theta = \beta  f^{a_{T-1}}_{s_T, s_{T-1}} z^\theta_{h_{T-1}}(a_{T-1}), \qquad \qquad \qquad && \forall \theta \in \Theta,~ \forall h_T \in \H_T.\\
    \end{alignedat}
\end{equation}
In the above optimization problem, the $\ell$ variables encodes the discounted \emph{stage reward} for each trajectory, and the $z$ variables are the discounted \emph{history-action occupation measures},
analogous to the state-action occupation measures in MDPs.
Let $z^{\theta*}_{h_t}$ denote the optimal solution to the linear program in~\eqref{eqn:primal}, the optimal Attacker strategy is given by 
\begin{equation}
    \label{eqn:optimal-attacker-strategy}
    \pi_{h_t}^{\theta*}(u) = \left \{
    \begin{array}{ll}
        \frac{z^{\theta*}_{h_t}(u)}{\sum_{\bar u}z^{\theta*}_{h_t}(\bar u)}, \quad &  \text{if } \sum_{\bar u}z^{\theta*}_{h_t}(\bar u) > 0,\vspace{+0.1in} \\
        \frac{1}{|\A(h_t)|}, & \text{otherwise.}
    \end{array}
    \right .
\end{equation}
Note that if $\sum_{\bar u} z^{\theta*}_{h_t}(\bar u) = 0$, the trajectory $h_t$ lies off the equilibrium path. For practical purpose, the strategies on such off-equilibrium trajectories are inconsequential and we simply set them to be uniform.

\subsection{Bilinear-Program Solution}
By augmenting the SAD constraint to the linear program derived above, we arrive at the following bilinear program. 
\begin{equation}
    \label{eqn:bilinear}
    \begin{alignedat}{3}
        J^{*}(s_0, \vb_{0})  &= \maximize_{\substack{z_{h_t}, \ell_{h_t}\\
        \forall t\in [T]}}  \quad \quad && \sum_{t=0}^T \sum_{h_{t} \in \H_{t}} \ell_{h_t}
        \\
        & \qquad \mathrm{subject}~\mathrm{to}~~~ && \ell_{h_0} =\sum_{\theta\in \Theta}  \bm{r}(s_{0})^\top z_{h_{0}}^\theta ,
        \\
        & &&\mathbf{1}^\top z^\theta_{h_0} = b^\theta_0, ~~ z^\theta_{h_0} \geq \mathbf{0}, ~~ &&\forall~ \theta\in \Theta,
        \\[0.15in]
        & &&\forall \; t = 1, \ldots, T-1, ~ h_t \in \H_t \\
        & &&\ell_{h_t} =\sum_{\theta\in \Theta}  \bm{r}(s_{})^\top z_{h_{t}}^\theta , \\
        & &&\mathbf{1}^\top z^\theta_{h_t} = \beta f_{s_{t},s_{t-1}}^{a_{t-1}} z^\theta_{h_{t-1}}(a_{t-1}), \qquad \qquad && \forall~ \theta\in \Theta, 
        \\[0.15in]
        &              &&\ell_{h_T} \bm{1} \leq \sum_{\theta \in \Theta} z_{h_T}^\theta \bm{V}^{\theta}_{s_{T}}  , && \forall h_T \in \H_T,\\
    &           && z_{h_T}^\theta = \beta  f^{a_{T-1}}_{s_T, s_{T-1}} z^\theta_{h_{T-1}}(a_{T-1}), \qquad \qquad \qquad && \forall \theta \in \Theta,~ \forall h_T \in \H_T,\\
    & &&\forall \; t, \tau \in [0, T-1] \\
    &           && \bar z^\theta_{h'_\tau} z^\theta_{h_t}  = \bar z^\theta_{h_t} z^\theta_{h'_\tau}, && \text{if } s_t, s'_\tau \in \gamma, \\
    &   && \bar z^\theta_{h_t} = \sum_{a \in \A} z^\theta_{h_t}(a) && \forall h_t \in \H_t.
    \end{alignedat}
\end{equation}
The optimal SAD-constrained Attacker strategy is similarly given by~\eqref{eqn:optimal-attacker-strategy}.

\paragraph{Defender's strategy.}
Given a complete Phase~I history $h_T$ and the optimal Attacker strategy $\pi^*$, the Defender can compute its belief on the Attacker's representation type using the belief update rule in~\eqref{eqn:belief-update}.
Suppose the belief at time $T$ is $\vb_T$, the Defender's strategy is then given by 
\begin{equation}
    \label{eqn:defender-strategy}
    \sigma^*_{h_T} \in \argmax_{\sigma \in \mathcal{P}(\Omega)} \sum_{\theta \in \Theta} b^\theta_T {\bm{V}_{s_T}^\theta}^\top \sigma.
\end{equation}

\begin{remark}
    The optimization problem in~\eqref{eqn:bilinear} shows that the Attacker optimizes over a distribution of trajectories across the game tree. Since the Attacker has full information and controls how the Defender's belief evolves through Attacker's actions, the optimization effectively plans over the induced belief dynamics by shaping the trajectory distribution.
\end{remark}

\section{Numerical Examples}

We present the solution to the example shown in Figure~\ref{fig:representation-concealment-schematic}, where the Phase~I terminal time is $T = 6$, and the initial belief is set to uniform: $\vb_0 = [0.5, 0.5]$. 
The running reward is defined as 
\begin{equation}
    \label{eqn:concealment-reward}
    r(s,a) = 
    \begin{cases}
        -0.1, & \text{if } s \notin G,  \\
        1.0,  & \text{if } s \in G,
    \end{cases}
\end{equation}
where the Attacker incurs a step cost of $-0.1$ until reaching a goal state $s \in G$, at which point it receives a terminal reward of $1.0$.
Figure~\ref{fig:example-trajectory} illustrates the Phase~I trajectories of the Attacker for each representation type.

\begin{figure}[t]
    \centering
    \includegraphics[width=0.9\linewidth]{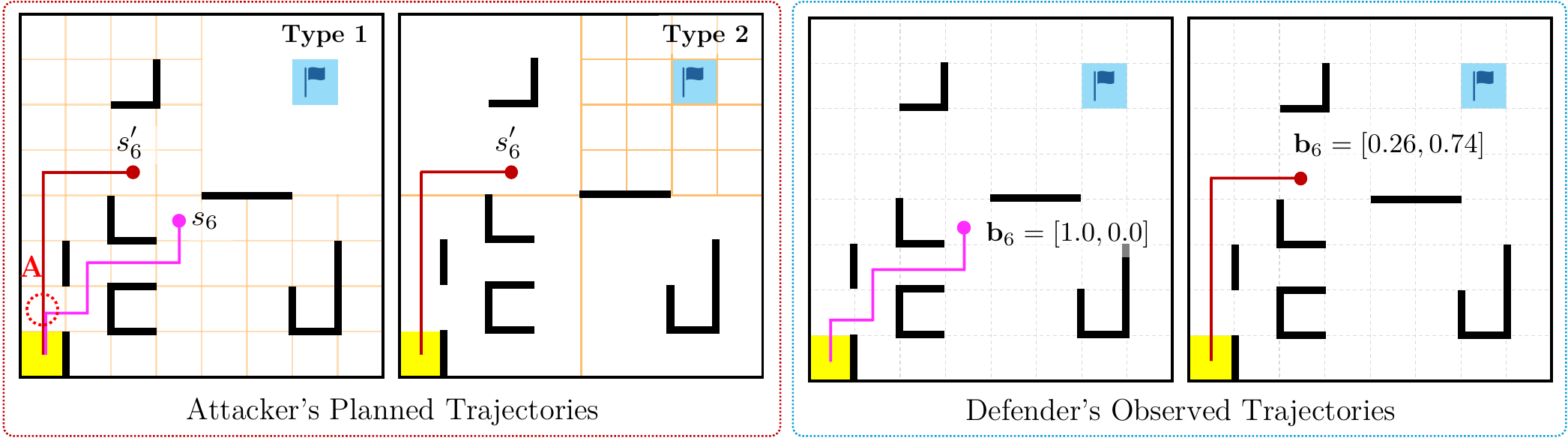}
    \caption{Left: Attacker trajectories under each representation type. 
    Right: Observed trajectories from the Defender’s perspective.}
    \label{fig:example-trajectory}
    \vspace{-0.1in}
\end{figure}

A key observation is that the type-1 Attacker randomizes its action at Point A: choosing \texttt{Up} with probability 0.358 and \texttt{Right} with 0.642. 
By selecting \texttt{Up}, the type-1 Attacker mimics the behavior of a type-2 Attacker which has a coarse representation in the obstacle area, thereby introducing ambiguity into the Defender’s belief. 
Specifically, if the Defender observes the Attacker on the red trajectory, its belief updates to $[0.264, 0.736]$—i.e., a 26.4\% chance that the Attacker is type-1 and 73.6\% for type-2. 
However, if the Attacker is observed on the magenta trajectory, the Defender infers with certainty that it is type-1 Attacker.

At the terminal states of these trajectories ($s_6$ for magenta and $s_6'$ for red), the values of the SAD-constrained MDP for each representation type and barrier configuration are shown below. The expected outcome is computed based on the Defender's belief $\vb_6$ on the Attacker's representation type.

\begin{minipage}{0.48\textwidth}
\vspace{+0.1in}
\centering
\renewcommand{\arraystretch}{1.2}
\begin{tabular}{@{}>{}p{2.7cm}| >{\centering\arraybackslash} p{2cm} >{\centering\arraybackslash} p{2cm}@{}}
$V^{\theta^*}_{\omega, \bm{s_6}} $ & ${\omega=1}$ & ${\omega=2}$ \\
\hhline{=|==}
${\theta=1}$ & -0.670 & 0.321 \\
${\theta=2}$ & 0.109 & -0.033 \\
\textbf{expected outcome} & -0.670 & 0.321
\end{tabular}
\vspace{+0.1in}
\end{minipage}
\hfill
\begin{minipage}{0.48\textwidth}
\vspace{+0.1in}
\centering
\renewcommand{\arraystretch}{1.2}
\begin{tabular}{@{}>{\bfseries}p{2.7cm}| >{\centering\arraybackslash} p{2cm} >{\centering\arraybackslash} p{2cm}@{}}
 $V^{\theta^*}_{\omega, \bm{s'_6}}$ & ${\omega=1}$ & ${\omega=2}$ \\
\hhline{=|==}
${\theta=1}$ & -0.800 & 0.093 \\
${\theta=2}$ & 0.095 & -0.225 \\
expected outcome & -0.141 & -0.141
\end{tabular}
\vspace{+0.1in}
\end{minipage}

If the Defender observes the magenta trajectory, it infers that the Attacker is type-1 and thus selects barrier configuration $\omega = 1$ to minimize the Attacker’s future performance.
In contrast, if the red trajectory is observed, the expected value of each barrier configuration—given the updated belief $[0.264, 0.736]$—is $-0.141$, making the Defender indifferent. 
As a result, the Defender randomizes between the two configurations with equal probability $[0.5, 0.5]$.

The induced uncertainty in the Defender's belief benefits the Attacker. 
For the type-1 Attacker, the Defender may select the suboptimal barrier~2, which allows the Attacker to take a straight-line path to the target (see Figure~\ref{fig:phase2-trjs}(c)).
Conversely, for the type-2 Attacker, the Defender may select barrier configuration~1, allowing the Attacker to enter the fine-resolution area directly (see Figure~\ref{fig:phase2-trjs}(d)).

\begin{figure}[b]
    \centering
    \includegraphics[width=\linewidth]{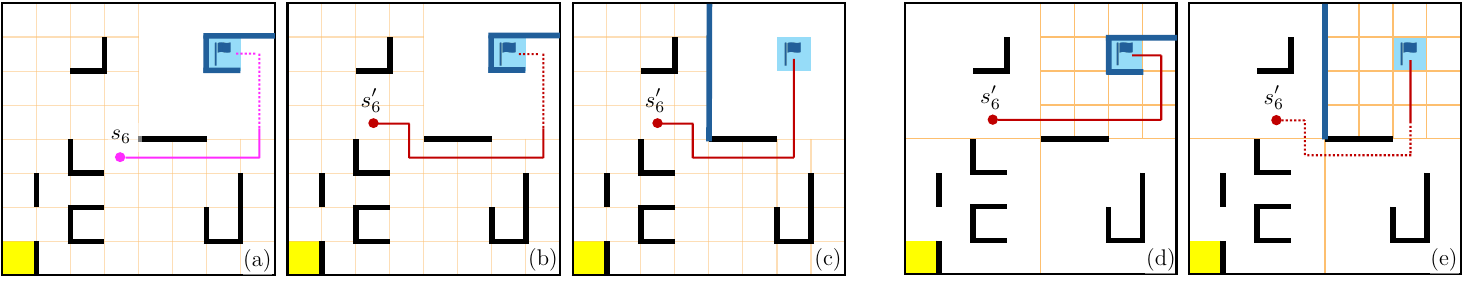}
    \caption{Trajectories in Phase~II. Subplots (a)–(c) show the trajectories under the type-1 representation, while (d)–(e) correspond to the type-2 representation. 
    The dashed segments indicate portions of the trajectory that require a randomized strategy; they represent the intended path, but the actual realized trajectory may differ due to stochasticity.}
    \label{fig:phase2-trjs}
\end{figure}

The heatmaps in \ref{fig:abstrt-conceal-heatmap} show the Phase~II value functions for each state under different barrier options and abstractions.
The color scale is consistent across subplots, with warmer colors indicating higher values.
Barrier option~1 yields the lowest values for abstraction type~1, while barrier option~2 is worst for abstraction type~2.
The chosen end states for Phase~I, $s_6$ and $s_6'$, are both of relatively high value.
However, the heatmaps do not reflect the difficulty of reaching these states given the abstraction constraints.
For example, although state $s_6$ has a high value for abstraction type~2 under barrier option~2, it is extremely difficult for the type-2 Attacker to reach due to randomized navigation with a coarse representation in the obstacle region.

\begin{figure}[t]
    \vspace{-0.1in}
    \centering
    \includegraphics[width=0.9\linewidth]{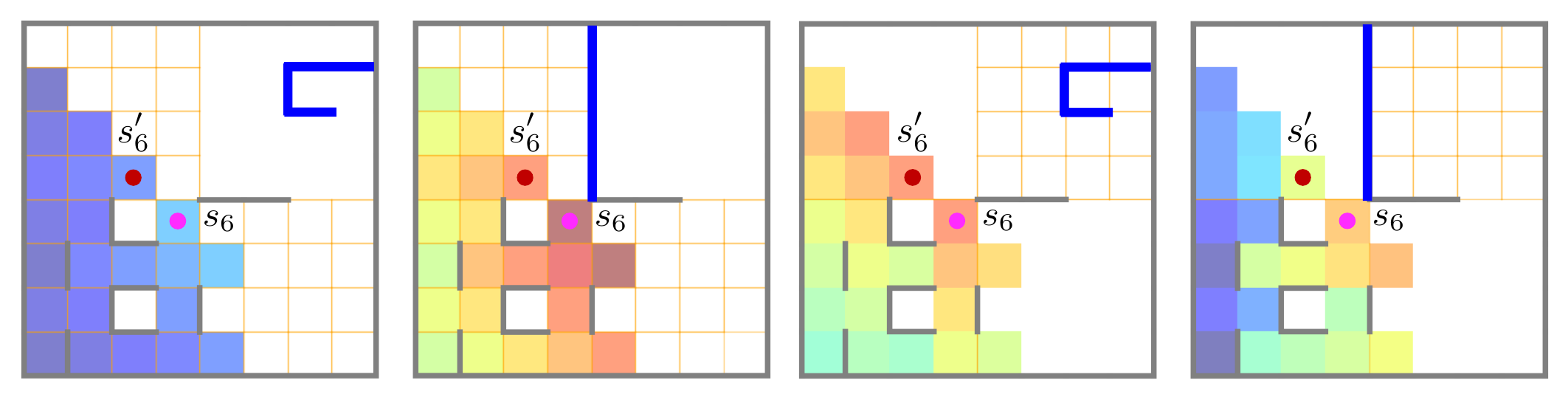}
    \caption{Value heatmaps of the Phase~II constrained MDP for states reachable within $T = 6$.}
    \label{fig:abstrt-conceal-heatmap}
    \vspace{-0.1in}
\end{figure}

To illustrate the benefit of representation concealment, consider first the 
\emph{full-information variant} of the game in which the Defender knows the 
Attacker’s representation type $\theta$ at the outset. 
In this case, the Defender deterministically selects the type-optimal barrier 
($\omega=1$ if $\theta=1$, $\omega=2$ if $\theta=2$), and the Attacker responds 
with the corresponding type-optimal trajectory (cyan path for type 1 and red path for type 2). 
The Attacker’s expected performance, averaged over the prior distribution on 
types, is $-0.978$, computed based on
\begin{equation*}
    -0.978 \;=\; 
    \underbrace{-0.530}_{\text{Phase~I}} 
    + \underbrace{0.5\times -0.670}_{\theta=1,\, \text{cyan path},\, \omega=1} 
    + \underbrace{0.5\times -0.225}_{\theta=2,\, \text{red path},\, \omega=2}.
\end{equation*}

In contrast, in the asymmetric-information setting the Attacker can employ 
randomized trajectories in Figure~\ref{fig:example-trajectory} to introduce ambiguity in Defender's belief as discussed. 
This raises the Attacker’s expected performance to $-0.841$:
\begin{align*}
    -0.841 &= 
    \underbrace{-0.530}_{\text{Phase~I}} 
    + \underbrace{0.5 \times 0.64 \times -0.67}_{\theta=1,\, \text{cyan path}\, \omega=1} 
    + \underbrace{0.5 \times 0.36 \times 0.5 \times -0.8}_{\theta=1,\, \text{red path},\, \omega=1} 
    + \underbrace{0.5 \times 0.36 \times 0.5 \times 0.093}_{\theta=1,\, \text{red path},\, \omega=2} \\
    &\quad 
    + \underbrace{0.5 \times 0.5 \times 0.095}_{\theta=2,\, \text{red path},\, \omega=1} 
    + \underbrace{0.5 \times 0.5 \times -0.225}_{\theta=2,\, \text{red path},\, \omega=2}.
\end{align*}
This corresponds to a $14\%$ performance improvement when both phases are included. 
If the constant Phase~I payoff is excluded, the relative performance improvement from representation concealment rises to $30.5\%$.

Finally, we examined the influence of the prior distribution $\vb_0$ on the Attacker's behavior.
When $\vb_0 = [0.8, 0.2]$, the type-1 Attacker adopts a similar randomization strategy as in the uniform prior case, randomizing at Point A with probabilities $0.09$ for \texttt{Up} and $0.91$ for \texttt{Down}, while the type-2 Attacker continues to follow the deterministic red path in Figure~\ref{fig:example-trajectory}. 
This yields the same posterior belief $[0.26, 0.74]$ at state $s_6'$, leaving the Defender indifferent between the two barrier placements.

In contrast, with $\vb_0 = [0.2, 0.8]$, both Attacker types deterministically follow the red path to $s_6'$, resulting in a posterior belief of $[0.2, 0.8]$. 
In this case, the Defender deterministically selects barrier configuration $2$. The rationale behind such a strategy is that reproducing the $[0.26, 0.74]$ posterior from the $[0.2, 0.8]$ prior would require the \emph{type-2} Attacker to randomize at Point A. 
Since the type-2 Attacker employs a coarse representation in Phase~I, randomization may cause the Attacker to be trapped near obstacles, resulting in a longer path in Phase~II. 
As a result, the algorithm assigns the deterministic red path to both types of Attacker.

\section{Conclusion}
This work explored the asymmetric information structure that arises when agents employ reduced environment representations in competitive games. 
We developed a Bayesian-game framework for the strategic concealment and inference of such representations in an attack–defense scenario, where the Attacker leverages its private representation to transform informational advantage into performance gain. 
Specifically, the Attacker must plan a trajectory to its goal while concealing its representation, whereas the Defender seeks to infer this hidden representation and select barriers to obstruct the Attacker’s advance.
We integrated the Defender’s inference and the Attacker’s planning into a unified bilinear optimization problem. Simulation results demonstrate that purposeful representation concealment naturally emerges as a means of improving the Attacker’s performance.

Future work includes extending the framework to settings where both sides operate with private representations, and developing methods to autonomously generate meaningful representation classes and priors, rather than assuming them as given.

 \bibliographystyle{ieeetr}
 \bibliography{ref.bib}

\newpage
\appendix

\section{Proof of Proposition 1}
\label{appdx-sec:prop-1}
\eqv*
\begin{proof}
    The ``$\Rightarrow$” direction is immediate and holds even when $b_t^\theta = 0$ for some $\theta$.
    
    For the ``$\Leftarrow$” direction, we proceed by contradiction.
    Assume that for all $\pi_{h_t} = \big(\pi_{h_t}^\theta\big)_{\theta \in \Theta} \in (\mathcal{P}(\A))^{|\Theta|}$, the following inequality holds:
    \begin{equation}
        \label{eqn:equivalence-assumption}
        \sum_{\theta} b_t^\theta J^\theta_t(h_t, \pi_{h_t}^{\theta*}) \geq \sum_{\theta} b_t^\theta  J^\theta_t(h_t, \pi_{h_t}^{\theta}).
    \end{equation}
    
    Suppose, on the contrary, that there exists some $\hat \theta \in \Theta$ and a corresponding Attacker strategy $\pi_{h_t}^{\hat \theta} \in \mathcal{P}(\A)$ such that 
    \begin{equation}
        \label{eqn:equivalence-contradiction}
        J^{\hat\theta}_t(h_t, \pi_{h_t}^{\hat \theta*}) < J^{\hat\theta}_t(h_t, \pi_{h_t}^{\hat \theta}).
    \end{equation}
    Consider the policy $\tilde \pi_{h_t} = (\pi_{h_t}^{1*}, \ldots, \pi_{h_t}^{\hat \theta}, \ldots, \pi_{h_t}^{|\Theta|}) \in \big(\mathcal{P}(\A)\big)^{|\Theta|}$ that modifies only the type-$\hat \theta$ component of the Attacker equilibrium policy, while keeping all other components unchanged. Then,
    \begin{align*}
        \sum_{\theta} b_t^\theta J^\theta_t(h_t, \pi_{h_t}^{\theta*}) - \sum_{\theta} b_t^\theta  J^\theta_t(h_t, \tilde \pi_{h_t}^{\theta})
        &= \sum_{\substack{\theta \in \theta \\ \theta\ne \hat{\theta}}} \Big(b_t^\theta J^\theta_t(h_t, \pi_{h_t}^{\theta*}) - J^\theta_t(h_t, \pi_{h_t}^{\theta*})\Big) + b_t^{\hat \theta}\Big( J^{\hat \theta}_t(h_t, \pi_{h_t}^{\hat \theta*}) - J^{\hat\theta}_t(h_t, \pi_{h_t}^{\hat \theta}) \Big) \\
        & = b_t^{\hat \theta}\big(J^{\hat \theta}_t(h_t, \pi_{h_t}^{\hat \theta*}) - J^{\hat\theta}_t(h_t, \pi_{h_t}^{\hat \theta})\big) < 0,
    \end{align*}
    where the strict inequality follows from~\eqref{eqn:equivalence-contradiction} and the assumption that $b_t^{\hat \theta} > 0$.
    This inequality contradicts the assumption~\eqref{eqn:equivalence-assumption}, thus completing the proof.
\end{proof}

\section{Bayesian Rules and Belief Update}
\label{appdx-sec:belief-update}
To derive the belief update rule from its conditional probability definition, we begin with the formal expression for the belief at time $t$:
\begin{align*}
    b_t^\theta &= \mathbb{P}_{\pi}\big(\Theta = \theta \big \vert H_t = h_t\big)
    = 
    \frac{\mathbb{P}_{\pi} (\Theta = \theta, H_t = h_t) }{\mathbb{P}_{\pi} (H_t = h_t)}
    =   \frac{\mathbb{P}_{\pi}\big(H_t = h_t \big \vert \Theta = \theta \big) \mathbb{P}(\Theta = \theta)}{\sum_{\theta'} \mathbb{P}_{\pi}\big(H_t = h_t \big \vert \Theta = \theta' \big)\mathbb{P}(\Theta = \theta')}\\
    &=\frac{b^{\theta}_0\prod_{\tau=0}^{t-1}\mathbb{P}\big(S_{\tau+1}=s_{\tau+1}|S_\tau = s_\tau, A_\tau = a_\tau\big) \pi_{h_\tau}^\theta (a_\tau)}{\sum_{\theta'} b^{\theta'}_0\prod_{\tau=0}^{t-1}\mathbb{P}\big(S_{\tau+1}=s_{\tau+1}|S_\tau = s_\tau, A_\tau = a_\tau\big) \pi_{h_\tau}^{\theta'} (a_\tau)}\\
    &=\frac{b^{\theta}_0\prod_{\tau=0}^{t-1}\mathbb{P}\big(S_{\tau+1}=s_{\tau+1}|S_\tau = s_\tau, A_\tau = a_\tau\big) ~\prod_{\tau=0}^{t-1}\pi_{h_\tau}^\theta (a_\tau)}{\sum_{\theta'} b^{\theta'}_0\prod_{\tau=0}^{t-1}\mathbb{P}\big(S_{\tau+1}=s_{\tau+1}|S_\tau = s_\tau, A_\tau = a_\tau\big) ~\prod_{\tau=0}^{t-1}\pi_{h_\tau}^{\theta'} (a_\tau)}.
\end{align*}

Since the transition probabilities do not depend on $\theta$, they cancel from numerator and denominator, yielding
\[
     b_t^\theta=\frac{b^{\theta}_0\prod_{\tau=0}^{t-1} \pi_{h_\tau}^\theta (a_\tau)}{\sum_{\theta'} b^{\theta'}_0\prod_{\tau=0}^{t-1}\pi_{h_\tau}^{\theta'} (a_\tau)}.
\]

For time $t+1$, the belief becomes
\[
     b_{t+1}^\theta = \frac{b^{\theta}_0\prod_{\tau=0}^{t} \pi_{h_\tau}^\theta (a_\tau)}{\sum_{\theta'} b^{\theta'}_0\prod_{\tau=0}^{t}\pi_{h_\tau}^{\theta'} (a_\tau)}.
\]

Define the normalizing factor as
\[\eta_{h_{t}} = \sum_{\theta' \in \Theta} b^{\theta'}_0\prod_{\tau=0}^{t-1}\pi_{h_\tau}^{\theta'} (a_\tau).\] 

Then, we obtain
\begin{align*}
    b_{t+1}^\theta &= \frac{b^{\theta}_0\prod_{\tau=0}^{t} \pi_{h_\tau}^\theta (a_\tau)}{\sum_{\theta'} b^{\theta'}_0\prod_{\tau=0}^{t}\pi_{h_\tau}^{\theta'} (a_\tau)}
    =\frac{\frac{b^{\theta}_0\prod_{\tau=0}^{t-1} \pi_{h_\tau}^\theta (a_\tau)}{{\eta_{h_t}}}~\pi_{h_t}^\theta(a_t)}{\sum_{\theta'}\frac{b^{\theta'}_0\prod_{\tau=0}^{t-1} \pi_{h_\tau}^{\theta'} (a_\tau)}{{\eta_{h_t}}}~\pi_{h_t}^{\theta'}(a_t)} 
    = \frac{b_{t}^\theta \pi_{h_t}^{\theta} (a_t)}{\sum_{\theta'} b_{t}^{\theta'} \pi_{h_t}^{\theta'} (a_t)},
\end{align*}
which matches the belief update rule in~\eqref{eqn:belief-update}.
\vspace{+1in}

\end{document}